\newtheorem{definition}{Definition}
\newtheorem{proposition}{Proposition}
\newtheorem{corollary}{Corollary}
\newtheorem{theorem}{Theorem}
\newtheorem{remark}{Remark}
\newcommand{\naturals}{\ensuremath{\mathbb{N}}}
\newcommand{\reals}{\ensuremath{\mathbb{R}}}
\newcommand{\pr}{\ensuremath{\mathbb{P}}}
\newcommand{\expectation}{\ensuremath{\mathbb{E}}}
\begin{document}
\markboth{SUBMITTED TO THE Eighth International Symposium on
Wireless Communication Systems, Aachen, Germany, 6--9 November,
2011} {I. SASON: On Concentration of the Crest Factor for OFDM
Signals}

\title{On the Concentration of the Crest Factor for OFDM Signals}

\author{Igal~Sason\\Department of Electrical Engineering\\
Technion - Israel Institute of Technology, Haifa 32000, Israel\\
E-mail: sason@ee.technion.ac.il}

\maketitle

\begin{abstract}
This paper applies several concentration inequalities to prove
concentration results for the crest factor of OFDM signals. The
considered approaches are, to the best of our knowledge, new in
the context of establishing concentration for OFDM signals.
\end{abstract}

\begin{keywords}
Concentration of measures, crest-factor, OFDM signals.
\end{keywords}

\section{Introduction}
\label{section: introduction}
Orthogonal-frequency-division-multiplexing (OFDM) is a modulation
that converts a high-rate data stream into a number of low-rate
steams that are transmitted over parallel narrow-band channels.
OFDM is widely used in several international standards for digital
audio and video broadcasting, and for wireless local area
networks. For a textbook providing a survey on OFDM, see e.g.
\cite[Chapter~19]{Molisch_book}.

One of the problems of OFDM is that the peak amplitude of the
signal can be significantly higher than the average amplitude.
This issue makes the transmission of OFDM signals sensitive to
non-linear devices in the communication path such as digital to
analog converters, mixers and high-power amplifiers. As a result
of this drawback, it increases the symbol error rate and it also
reduces the power efficiency of OFDM signals as compared to
single-carrier systems. Commonly, the impact of nonlinearities is
described by the distribution of the crest-factor (CF) of the
transmitted signal \cite{LitsynW06}, but its calculation involves
time-consuming simulations even for a small number of
sub-carriers. The expected value of the CF for OFDM signals is
known to scale like the logarithm of the number of sub-carriers of
the OFDM signal (see \cite{LitsynW06}, \cite[Section~4]{SalemZ}
and \cite{WunderB_IT}).

In this paper, we consider two of the main approaches for proving
concentration inequalities, and apply them to derive concentration
results for the crest factor of OFDM signals. The first approach
is based on martingales, and the other approach is Talagrand's
method for proving concentration inequalities in product spaces.
It is noted that some of these concentration inequalities can be
derived using ideas from information theory (see, e.g.,
\cite{McDiarmid_tutorial} and references therein).

Considering the martingale approach for proving concentration
results, the Azuma-Hoeffding inequality is by now a well-known
methodology that has been often used to prove concentration of
measures. It is due to Hoeffding \cite{Hoeffding} who proved this
inequality for a sum of independent and bounded random variables,
and Azuma \cite{Azuma} who later extended it to bounded-difference
martingales. In the context of communication and information
theoretic aspects, Azuma's inequality was used during the last
decade in the coding literature for establishing concentration
results for codes defined on graphs and iterative decoding
algorithms (see, \cite{RiU_book} and references therein). Some
other martingale-based concentration inequalities were also
recently applied to the performance evaluation of random coding
over non-linear communication channels \cite{Volterra_IT_March11}.
McDiarmid's inequality is an improved version of Azuma's
inequality in the special case where one considers the
concentration of a function $f: \reals^n \rightarrow \reals$ of
$n$ independent RVs when the variation of $f(x_1, \ldots, x_n)$
w.r.t. each of its coordinates is bounded (and when all the other
$n-1$ components are kept fixed). Under this setting, it gives an
improvement by a factor of~4 in the exponent. This inequality is
applied in this paper in order to prove the concentration of the
crest factor of OFDM signals around the expected value.

A second approach for proving concentration inequalities in
product spaces was developed by Talagrand in his seminal paper
\cite{Talagrand95}. It forms in general a powerful probabilistic
tool for establishing concentration results for coordinate-wise
Liphschitz functions of independent random variables (see also,
e.g., \cite[Section~2.4.2]{Dembo_Zeitouni} and
\cite[Section~4]{McDiarmid_tutorial}). This approach was used in
\cite{Korada_Macris_IT10} to prove concentration inequalities, in
the large system limit, for a code division multiple access (CDMA)
system. Talagrand's inequality is used in this paper to prove a
concentration result (near the median) of the crest factor of OFDM
signals, and it also enables to obtain an upper bound on the
distance between the median and the expected value.

A stronger concentration inequality for the crest factor of OFDM
signals was introduced in \cite[Theorem~3]{LitsynW06} under some
assumptions on the probability distribution of the considered
problem (the reader is referred to the two conditions in
\cite[Theorem~3]{LitsynW06}, followed by
\cite[Corollary~5]{LitsynW06}). These requirements are not needed
in the following analysis, and the derivation of the concentration
inequalities here is rather simple and it provides some further
insight to this issue.

\section{Some Concentration Inequalities}
In the following, we present briefly essential background on
concentration inequalities that is required for the analysis in
this paper. In the next section, we will apply these probabilistic
tools for obtaining concentration inequalities for the crest
factor of OFDM signals.

\subsection{Azuma's Inequality} \label{subsection:
Azuma's inequality} Azuma's inequality\footnote{Azuma's inequality
is also known as the Azuma-Hoeffding inequality. Since this
inequality is referred several times in this paper, it will be
named from this point as Azuma's inequality for the sake of
brevity.} forms a useful concentration inequality for
bounded-difference martingales \cite{Azuma}. In the following,
this inequality is introduced.

\begin{theorem}{\bf[Azuma's inequality]}
Let $\{X_k, \mathcal{F}_k\}_{k=0}^{\infty}$ be a
discrete-parameter real-valued martingale sequence such that for
every $k \in \naturals$, the condition $ |X_k - X_{k-1}| \leq d_k$
holds a.s. for some non-negative constants
$\{d_k\}_{k=1}^{\infty}$. Then
\begin{equation}
\pr( | X_n - X_0 | \geq r) \leq 2 \exp\left(-\frac{r^2}{2
\sum_{k=1}^n d_k^2}\right) \, \quad \forall \, r \geq 0.
\label{eq: Azuma's concentration inequality - general case}
\end{equation}
\label{theorem: Azuma's concentration inequality}
\end{theorem}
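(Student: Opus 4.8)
The plan is to combine the Chernoff bounding trick with an iterated conditioning down the filtration. First I would pass to the martingale differences $\xi_k = X_k - X_{k-1}$, for which the martingale property gives $\expectation[\xi_k \mid \mathcal{F}_{k-1}] = 0$ a.s.\ and the hypothesis gives $|\xi_k| \leq d_k$ a.s. Then, for an arbitrary $\lambda > 0$, Markov's inequality applied to the nonnegative random variable $e^{\lambda(X_n - X_0)}$ yields $\pr(X_n - X_0 \geq r) \leq e^{-\lambda r}\, \expectation\bigl[e^{\lambda \sum_{k=1}^n \xi_k}\bigr]$, so the whole problem reduces to bounding this exponential moment.

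To bound it, I would peel off the terms one at a time using the tower property: since $e^{\lambda \sum_{k=1}^{n-1}\xi_k}$ is $\mathcal{F}_{n-1}$-measurable, $\expectation\bigl[e^{\lambda\sum_{k=1}^{n}\xi_k}\bigr] = \expectation\Bigl[e^{\lambda\sum_{k=1}^{n-1}\xi_k}\,\expectation\bigl[e^{\lambda\xi_n}\mid\mathcal{F}_{n-1}\bigr]\Bigr]$. The key ingredient — and I expect this to be the main obstacle — is the conditional form of Hoeffding's lemma: if a random variable $Y$ satisfies $\expectation[Y\mid\mathcal{G}]=0$ and $|Y|\leq d$ a.s., then $\expectation[e^{\lambda Y}\mid\mathcal{G}]\leq e^{\lambda^2 d^2/2}$ a.s. I would prove this from the convexity of $t\mapsto e^{\lambda t}$ on $[-d,d]$, which gives the pointwise bound $e^{\lambda Y}\leq \frac{d-Y}{2d}\,e^{-\lambda d}+\frac{d+Y}{2d}\,e^{\lambda d}$; taking conditional expectations kills the term linear in $Y$, leaving $\cosh(\lambda d)$, which is at most $e^{\lambda^2 d^2/2}$ by comparing power series coefficients. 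Applying this with $\mathcal{G}=\mathcal{F}_{n-1}$, $Y=\xi_n$, $d=d_n$ gives $\expectation[e^{\lambda\xi_n}\mid\mathcal{F}_{n-1}]\leq e^{\lambda^2 d_n^2/2}$.

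Substituting this estimate and iterating the conditioning argument $n$ times collapses the exponential moment to $\expectation\bigl[e^{\lambda(X_n-X_0)}\bigr]\leq \exp\bigl(\tfrac{\lambda^2}{2}\sum_{k=1}^n d_k^2\bigr)$, so that $\pr(X_n - X_0 \geq r)\leq \exp\bigl(-\lambda r + \tfrac{\lambda^2}{2}\sum_{k=1}^n d_k^2\bigr)$ for every $\lambda>0$. Optimizing the exponent over $\lambda$ (the minimizer being $\lambda = r/\sum_{k=1}^n d_k^2$) gives the one-sided tail bound $\pr(X_n - X_0 \geq r)\leq \exp\bigl(-\frac{r^2}{2\sum_{k=1}^n d_k^2}\bigr)$. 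Finally, since $\{-X_k,\mathcal{F}_k\}$ is again a martingale with the same bounded-difference constants $d_k$, the identical argument bounds $\pr\bigl(-(X_n - X_0)\geq r\bigr)$ by the same quantity, and a union bound over the two one-sided events produces the factor of $2$ in~\eqref{eq: Azuma's concentration inequality - general case}. Apart from the conditional Hoeffding lemma — where some care is needed to justify the a.s.\ convexity inequality before integrating — the remaining steps are routine.
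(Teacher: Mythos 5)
Your argument is correct and is the standard Chernoff/conditional-Hoeffding proof of the Azuma--Hoeffding inequality. The paper itself does not prove this theorem --- it defers to the cited reference \cite{McDiarmid_tutorial} --- and the proof given there is essentially the one you outline (exponential Markov bound, the convexity-based conditional bound $\expectation[e^{\lambda \xi_k}\mid\mathcal{F}_{k-1}]\leq e^{\lambda^2 d_k^2/2}$, iteration via the tower property, optimization over $\lambda$, and a union bound for the two-sided statement), so there is nothing to reconcile.
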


The concentration inequality stated in Theorem~\ref{theorem:
Azuma's concentration inequality} was proved in \cite{Hoeffding}
for independent bounded random variables, followed by a discussion
on sums of dependent random variables; this inequality was later
derived in \cite{Azuma} for bounded-difference martingales. The
reader is referred, e.g., to \cite{McDiarmid_tutorial} for a
proof.

\subsection{A Refined Version of Azuma's Inequality}
The following refined version of Azuma's inequality was introduced
in \cite{Sason_submitted_paper} (which includes some other
approaches for refining Azuma's inequality).
\begin{theorem}
Let $\{X_k, \mathcal{F}_k\}_{k=0}^{\infty}$ be a
discrete-parameter real-valued martingale. Assume that, for some
constants $d, \sigma > 0$, the following two requirements are
satisfied a.s.
\begin{eqnarray*}
&& | X_k - X_{k-1} | \leq d, \\
&& \text{Var} (X_k | \mathcal{F}_{k-1}) = \expectation \bigl[(X_k
- X_{k-1})^2 \, | \, \mathcal{F}_{k-1} \bigr] \leq \sigma^2
\end{eqnarray*}
for every $k \in \{1, \ldots, n\}$. Then, for every $\alpha \geq
0$,
\begin{equation}
\hspace*{-0.2cm} \pr(|X_n-X_0| \geq \alpha n) \leq 2 \exp\left(-n
\, D\biggl(\frac{\delta+\gamma}{1+\gamma} \Big|\Big|
\frac{\gamma}{1+\gamma}\biggr) \right) \label{eq: first refined
concentration inequality}
\end{equation}
where
\begin{equation}
\gamma \triangleq \frac{\sigma^2}{d^2}, \quad \delta \triangleq
\frac{\alpha}{d}  \label{eq: notation}
\end{equation}
and
\begin{equation}
D(p || q) \triangleq p \ln\Bigl(\frac{p}{q}\Bigr) + (1-p)
\ln\Bigl(\frac{1-p}{1-q}\Bigr), \quad \forall \, p, q \in [0,1]
\label{eq: divergence}
\end{equation}
is the divergence (a.k.a. relative entropy or Kullback-Leibler
distance) between the two probability distributions $(p,1-p)$ and
$(q,1-q)$. If $\delta>1$, then the probability on the left-hand
side of \eqref{eq: first refined concentration inequality} is
equal to zero. \label{theorem: first refined concentration
inequality}
\end{theorem}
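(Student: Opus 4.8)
The plan is to run the standard Chernoff/exponential-moment argument, but with a conditional moment-generating-function estimate sharp enough to exploit \emph{both} hypotheses simultaneously, and then to optimize the resulting exponent over the free parameter. First I would reduce to a one-sided bound: since $\{-X_k,\mathcal{F}_k\}$ is again a martingale with the same difference bound $d$ and the same conditional-variance bound $\sigma^2$, a union bound reduces \eqref{eq: first refined concentration inequality} to showing $\pr(X_n-X_0\geq\alpha n)\leq\exp\bigl(-nD(\tfrac{\delta+\gamma}{1+\gamma}\,\|\,\tfrac{\gamma}{1+\gamma})\bigr)$, with the factor $2$ then coming for free. Writing $\xi_k\triangleq X_k-X_{k-1}$ and applying Markov's inequality to $e^{t(X_n-X_0)}$ for an arbitrary $t\geq 0$, I would peel off the last increment by conditioning on $\mathcal{F}_{n-1}$, using $\expectation[e^{t(X_n-X_0)}]=\expectation\bigl[e^{t(X_{n-1}-X_0)}\,\expectation[e^{t\xi_n}\,|\,\mathcal{F}_{n-1}]\bigr]$, and iterate $n$ times to obtain a product bound.

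The core step is a single-increment estimate: if $Y$ is $\mathcal{F}_{k-1}$-conditionally mean zero with $|Y|\leq d$ and $\expectation[Y^2\,|\,\mathcal{F}_{k-1}]\leq\sigma^2=\gamma d^2$, then $\expectation[e^{tY}\,|\,\mathcal{F}_{k-1}]\leq\frac{\gamma e^{td}+e^{-\gamma td}}{1+\gamma}$ for every $t\geq 0$. I would prove this by majorizing $x\mapsto e^{tx}$ on $[-d,d]$ by the quadratic $\phi(x)=a+bx+cx^2$ that meets $e^{tx}$ at $x=d$ and is tangent to it at $x=-\gamma d$; convexity of the exponential forces $\phi(x)\geq e^{tx}$ on the whole interval and $c\geq 0$, whence $\expectation[e^{tY}\,|\,\mathcal{F}_{k-1}]\leq a+c\,\expectation[Y^2\,|\,\mathcal{F}_{k-1}]\leq a+c\sigma^2$, and a short computation identifies $a+c\sigma^2$ with the claimed expression. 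Equivalently, among all laws on $[-d,d]$ with the prescribed mean and second moment the conditional MGF is largest for the two-point law assigning mass $\tfrac{\gamma}{1+\gamma}$ to $d$ and $\tfrac{1}{1+\gamma}$ to $-\gamma d$. Substituting into the telescoped product yields $\pr(X_n-X_0\geq\alpha n)\leq\exp\bigl(-n[\alpha t-\ln\frac{\gamma e^{td}+e^{-\gamma td}}{1+\gamma}]\bigr)$ for every $t\geq 0$.

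It then remains to maximize the exponent over $t\geq 0$. The exponent is concave in $t$ (a linear term minus a log-sum-exp), so setting its derivative to zero is legitimate; with $x=td$ this gives the stationarity condition $e^{(1+\gamma)x}=\frac{\gamma+\delta}{\gamma(1-\delta)}$, which has a nonnegative root exactly when $\delta<1$. Plugging this root back in, and using that at the optimum $\gamma e^x+e^{-\gamma x}=\gamma e^x\cdot\frac{1+\gamma}{\gamma+\delta}$, the exponent collapses after rearrangement to $a\ln\frac{a}{b}+(1-a)\ln\frac{1-a}{1-b}$ with $a=\frac{\delta+\gamma}{1+\gamma}$ and $b=\frac{\gamma}{1+\gamma}$, i.e. to $D\bigl(\frac{\delta+\gamma}{1+\gamma}\,\|\,\frac{\gamma}{1+\gamma}\bigr)$, as claimed; and in the remaining case $\delta>1$ one simply observes $|X_n-X_0|\leq\sum_{k=1}^n|\xi_k|\leq nd<\alpha n$ almost surely, so the probability in question is zero. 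I expect the single-increment MGF bound to be the main obstacle: it is precisely the \emph{sharpness} of this step — using the exact quadratic majorant / two-point extremal rather than a cruder Hoeffding- or Bennett-type bound — that makes the optimized exponent come out as the exact binary divergence rather than a weaker surrogate, and the final algebraic collapse of the optimized exponent to $D(\cdot\,\|\,\cdot)$, though routine, must also be carried out with some care.
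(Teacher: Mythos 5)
Your proposal is correct and follows essentially the same route as the paper, whose proof is only a pointer to the argument of \cite[Corollary~2.4.7]{Dembo_Zeitouni} (with details in \cite[Section~III]{Sason_submitted_paper}): a Chernoff bound with the Bennett-type conditional MGF estimate $\expectation[e^{t\xi_k}\,|\,\mathcal{F}_{k-1}]\leq\frac{\gamma e^{td}+e^{-\gamma td}}{1+\gamma}$ obtained from the quadratic majorant tangent at $-\gamma d$ and interpolating at $d$, followed by optimization over $t$ to recover the binary divergence. Your single-increment bound, the stationarity condition $e^{(1+\gamma)x}=\frac{\gamma+\delta}{\gamma(1-\delta)}$, and the algebraic identification of the optimized exponent with $D\bigl(\frac{\delta+\gamma}{1+\gamma}\,\big\|\,\frac{\gamma}{1+\gamma}\bigr)$ all check out, as does the trivial $\delta>1$ case.
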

\begin{proof}
The idea of the proof of Theorem~\ref{theorem: first refined
concentration inequality} is essentially similar to the proof of
\cite[Corollary~2.4.7]{Dembo_Zeitouni}. The full proof is provided
in \cite[Section~III]{Sason_submitted_paper}.
\end{proof}

\begin{proposition}
Let $\{X_k, \mathcal{F}_k\}_{k=0}^{\infty}$ be a
discrete-parameter real-valued martingale. Then, for every $\alpha
\geq 0$,
\begin{equation}
\pr(|X_n-X_0| \geq \alpha \sqrt{n}) \leq 2
\exp\Bigl(-\frac{\delta^2}{2\gamma}\Bigr) \Bigl(1+
O\bigl(n^{-\frac{1}{2}}\bigr)\Bigr) \label{eq: concentration1}
\end{equation}
where $\gamma$ and $\delta$ are introduced in \eqref{eq:
notation}. \label{proposition: a similar scaling of the
concentration inequalities}
\end{proposition}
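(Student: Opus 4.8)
The plan is to obtain Proposition~\ref{proposition: a similar scaling of the concentration inequalities} from Theorem~\ref{theorem: first refined concentration inequality} by a change of variable followed by an asymptotic expansion of the binary divergence. Concretely, I would apply the bound \eqref{eq: first refined concentration inequality} with the parameter $\alpha$ there replaced by $\alpha/\sqrt{n}$, so that its left-hand side becomes $\pr(|X_n - X_0| \ge \alpha\sqrt{n})$. Since $\gamma = \sigma^2/d^2$ in \eqref{eq: notation} does not depend on $\alpha$, it is unchanged, whereas $\delta = \alpha/d$ is replaced by $\delta_n \triangleq \delta/\sqrt{n} \to 0$. This yields
\begin{equation*}
\pr(|X_n - X_0| \ge \alpha\sqrt{n}) \le 2\exp\left(-n \, D\Bigl(\tfrac{\delta_n + \gamma}{1+\gamma} \, \Big|\Big| \, \tfrac{\gamma}{1+\gamma}\Bigr)\right),
\end{equation*}
and for all $n$ large enough one has $\delta_n < 1$, so the degenerate regime of Theorem~\ref{theorem: first refined concentration inequality} in which the probability vanishes identically plays no role in the asymptotics.

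The next step is a Taylor expansion of the divergence in its first argument around the point where the two arguments agree. Writing $q \triangleq \gamma/(1+\gamma)$, one has $\tfrac{\delta_n+\gamma}{1+\gamma} = q + \epsilon_n$ with $\epsilon_n \triangleq \delta_n/(1+\gamma) = \delta/\bigl((1+\gamma)\sqrt{n}\bigr) \to 0$. From \eqref{eq: divergence}, $D(p||q)$ vanishes together with its first derivative in $p$ at $p=q$, and its second derivative at that point equals $1/\bigl(q(1-q)\bigr)$. Using $q(1-q) = \gamma/(1+\gamma)^2$ and $\epsilon_n^2 = \delta^2/\bigl((1+\gamma)^2 n\bigr)$, this gives
\begin{equation*}
D\Bigl(\tfrac{\delta_n+\gamma}{1+\gamma} \, \Big|\Big| \, \tfrac{\gamma}{1+\gamma}\Bigr) = \frac{\epsilon_n^2}{2 q(1-q)} + O\bigl(\epsilon_n^3\bigr) = \frac{\delta^2}{2\gamma n} + O\bigl(n^{-3/2}\bigr),
\end{equation*}
so that $n \, D(\cdot||\cdot) = \tfrac{\delta^2}{2\gamma} + O(n^{-1/2})$. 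Substituting back and using $\exp\bigl(-\tfrac{\delta^2}{2\gamma} + O(n^{-1/2})\bigr) = \exp\bigl(-\tfrac{\delta^2}{2\gamma}\bigr)\bigl(1 + O(n^{-1/2})\bigr)$ then produces \eqref{eq: concentration1}.

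The only part needing genuine care is the control of the remainder in the expansion of the divergence: one should check that the error in $D$ is truly $O(\epsilon_n^3)$, \emph{uniformly} in $n$, for instance via the Lagrange form of the remainder together with the fact that the third derivative of $p \mapsto D(p||q)$ stays bounded on a fixed neighbourhood of $q$ that is bounded away from $0$ and $1$ (which holds since $q$ is a fixed number in $(0,1)$ and $\epsilon_n \to 0$). Granting this, the $O(n^{-3/2})$ term becomes $O(n^{-1/2})$ after multiplication by $n$ and is absorbed into the stated multiplicative factor; the rest is a routine substitution.
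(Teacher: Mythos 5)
Your derivation is correct and follows essentially the same route as the paper, which obtains the proposition from Theorem~\ref{theorem: first refined concentration inequality} (deferring the details to an external appendix): substitute $\alpha/\sqrt{n}$ for $\alpha$, note the degenerate case $\delta_n>1$ is eventually vacuous, and expand the divergence to second order about $q=\gamma/(1+\gamma)$, where your computation $n\,D(\cdot||\cdot)=\tfrac{\delta^2}{2\gamma}+O(n^{-1/2})$ checks out and you correctly flag the uniform control of the Taylor remainder as the only delicate point.
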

\begin{proof}
This inequality follows from Theorem~\ref{theorem: first refined
concentration inequality} (see
\cite[Appendix~H]{Sason_submitted_paper}).
\end{proof}

\subsection{McDiarmid's Inequality}
In the following, we state McDiarmid's inequality (see
\cite[Theorem~3.1]{McDiarmid_tutorial}).

\begin{theorem}
Let ${\bf{X}} = (X_1, \ldots, X_n)$ be a family of independent
random variables with $X_k$ taking values in a set $A_k$ for each
$k$. Suppose that a real-valued function $f$, defined on $\prod_k
A_k$, satisfies
\begin{equation*}
| f({\bf{x}}) - f({\bf{x'}}) | \leq c_k
\end{equation*}
whenever the vectors ${\bf{x}}$ and ${\bf{x'}}$ differ only in the
$k$-th coordinate. Let $\mu \triangleq \expectation[f(X)]$ be the
expected value of $f(X)$. Then, for every $\alpha \geq 0$,
\begin{equation*}
\pr(| f(X) - \mu | \geq \alpha) \leq 2
\exp\left(-\frac{2\alpha^2}{\sum_k c_k^2} \right).
\end{equation*}
\label{theorem: McDiarmid's Inequality}
\end{theorem}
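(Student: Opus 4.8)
The plan is to build the Doob martingale associated with $f(\vect{X})$ and then to obtain a sharper bound on the moment generating function of its increments than the one underlying Azuma's inequality (Theorem~\ref{theorem: Azuma's concentration inequality}), exploiting the fact that each increment is a.s.\ confined to an interval whose \emph{width} --- rather than whose absolute value --- is controlled by $c_k$. This is exactly what yields the factor-of-$4$ improvement in the exponent.

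First I would set $\mathcal{F}_0 = \{\emptyset, \Omega\}$ and $\mathcal{F}_k = \sigma(X_1,\ldots,X_k)$ for $1 \le k \le n$, and define $Y_k \triangleq \expectation[f(\vect{X}) \,|\, \mathcal{F}_k]$. Then $\{Y_k, \mathcal{F}_k\}_{k=0}^n$ is a martingale with $Y_0 = \mu$ and $Y_n = f(\vect{X})$, so it suffices to control $\pr(|Y_n - Y_0| \ge \alpha)$. By the independence of $X_1,\ldots,X_n$ one can write $Y_k = g_k(X_1,\ldots,X_k)$, where $g_k(x_1,\ldots,x_k) \triangleq \expectation\bigl[f(x_1,\ldots,x_k,X_{k+1},\ldots,X_n)\bigr]$, and likewise $Y_{k-1} = \expectation[g_k(X_1,\ldots,X_{k-1},X_k') \,|\, \mathcal{F}_{k-1}]$ for an independent copy $X_k'$ of $X_k$.

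The key step --- and the step I expect to be the main obstacle --- is to show that, conditionally on $\mathcal{F}_{k-1}$, the increment $Y_k - Y_{k-1}$ almost surely lies in an interval of length at most $c_k$. The bounded-difference hypothesis gives $|g_k(x_1,\ldots,x_{k-1},u) - g_k(x_1,\ldots,x_{k-1},v)| \le c_k$ for all $u,v$, since the two arguments of $f$ inside the expectation differ only in the $k$-th coordinate; hence, setting $L_k \triangleq \inf_u g_k(X_1,\ldots,X_{k-1},u) - Y_{k-1}$ and $U_k \triangleq \sup_u g_k(X_1,\ldots,X_{k-1},u) - Y_{k-1}$ gives $\mathcal{F}_{k-1}$-measurable bounds with $L_k \le Y_k - Y_{k-1} \le U_k$ and $U_k - L_k \le c_k$. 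I would then invoke Hoeffding's lemma --- a zero-mean random variable $Z$ taking values in an interval of length $\ell$ satisfies $\expectation[e^{tZ}] \le e^{t^2 \ell^2/8}$ for all $t$ --- applied conditionally, to obtain $\expectation\bigl[e^{t(Y_k - Y_{k-1})} \,|\, \mathcal{F}_{k-1}\bigr] \le e^{t^2 c_k^2/8}$.

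Finally I would iterate this bound via the tower property to get $\expectation\bigl[e^{t(Y_n - Y_0)}\bigr] \le \exp\bigl(\tfrac{t^2}{8}\sum_{k=1}^n c_k^2\bigr)$, apply the Chernoff bound $\pr(Y_n - Y_0 \ge \alpha) \le e^{-t\alpha}\,\expectation[e^{t(Y_n-Y_0)}]$ and optimize over $t>0$; the choice $t = 4\alpha / \sum_k c_k^2$ gives $\pr(Y_n - Y_0 \ge \alpha) \le \exp\bigl(-2\alpha^2/\sum_k c_k^2\bigr)$. Running the same argument with $-f$ in place of $f$ bounds the lower tail identically, and a union bound over the two one-sided events produces the factor of $2$, completing the proof. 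Note that the crude bound $|Y_k - Y_{k-1}| \le c_k$ would force an interval of length $2c_k$ in Hoeffding's lemma and reproduce only Azuma's inequality; the improvement hinges entirely on the width bound $U_k - L_k \le c_k$ established in the previous step.
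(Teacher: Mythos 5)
Your proof is correct and is precisely the standard martingale argument that the paper alludes to: the paper itself gives no proof of this theorem, only the remark that ``this inequality is proved with the aid of martingales'' and a citation to \cite[Theorem~3.1]{McDiarmid_tutorial}. Your write-up correctly identifies the one point where the argument improves on Azuma's inequality --- bounding the \emph{width} of the conditional range of each Doob-martingale increment by $c_k$ rather than its absolute value --- and the subsequent conditional Hoeffding-lemma, tower-property, and Chernoff steps are all carried out correctly.
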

This inequality is proved with the aid of martingales. It has some
nice applications which were exemplified in the context of
algorithmic discrete mathematics (see
\cite[Section~3]{McDiarmid_tutorial}).

\subsection{Talagrand's inequality}
Talagrand's inequality is an approach used for establishing
concentration results on product spaces, and this technique was
introduced in Talagrand's landmark paper \cite{Talagrand95}.

We provide in the following two definitions that will be required
for the introduction of a special form of Talagrand's
inequalities.

\begin{definition}[Hamming distance]
Let ${\bf{x}}, {\bf{y}}$ be two $n$-length vectors. The Hamming
distance between ${\bf{x}}$ and ${\bf{y}}$ is the number of
coordinates where ${\bf{x}}$ and ${\bf{y}}$ disagree, i.e.,
\begin{equation*}
d_{\text{H}}({\bf{x}},{\bf{y}}) \triangleq \sum_{i=1}^n I_{\{x_i
\neq y_i\}}
\end{equation*}
where $I$ stands for the indicator function.
\end{definition}

The following suggests a generalization and normalization of the
previous distance metric.
\begin{definition}
Let $a = (a_1, \ldots, a_n) \in \reals_{+}^n$ (i.e., $a$ is a
non-negative vector) satisfy $||a||^2 = \sum_{i=1}^n (a_i)^2 = 1$.
Then, define
\begin{equation*}
d_a({\bf{x}},{\bf{y}}) \triangleq \sum_{i=1}^n a_i I_{\{x_i \neq
y_i\}}.
\end{equation*}
Hence, $d_{\text{H}}({\bf{x}},{\bf{y}}) = \sqrt{n} \,
d_a({\bf{x}},{\bf{y}})$ for $a = \bigl(\frac{1}{\sqrt{n}}, \ldots,
\frac{1}{\sqrt{n}}\bigr)$.
\end{definition}

The following is a special form of Talagrand's inequalities
(\cite[Chapter~4]{McDiarmid_tutorial}, \cite{Talagrand95},
\cite{Talagrand96}).
\begin{theorem}[Talagrand's inequality]
Let the random vector ${\bf{X}} = (X_1, \ldots, X_n)$ be a vector of
independent random variables with $X_k$ taking values in a set
$A_k$, and let $A \triangleq \prod_{k=1}^n A_k$. Let $f: A
\rightarrow \reals$ satisfy the condition that, for every
${\bf{x}} \in A$, there exists a non-negative, normalized
$n$-length vector $a = a(x)$ such that
\begin{equation}
f({\bf{x}}) \leq f({\bf{y}}) + \sigma d_a({\bf{x}}, {\bf{y}}),
\quad \forall \, {\bf{y}} \in A  \label{eq: Talagrand's condition}
\end{equation}
for some fixed value $\sigma>0$. Then, for every $\alpha \geq 0$,
\begin{equation}
\pr(|f(X) - m| \geq \alpha) \leq 4 \exp\left(-\frac{\alpha^2}{4
\sigma^2}\right)  \label{eq: Talagrand's inequality}
\end{equation}
where $m$ is the median of $f(X)$ (i.e., $\pr(f(X) \leq m) \geq
\frac{1}{2}$ and $\pr(f(X) \geq m) \geq \frac{1}{2}$). The same
conclusion in \eqref{eq: Talagrand's inequality} holds if the
condition in \eqref{eq: Talagrand's condition} is replaced by
\begin{equation}
f({\bf{y}}) \leq f({\bf{x}}) + \sigma d_a({\bf{x}}, {\bf{y}}),
\quad \forall \, {\bf{y}} \in A.  \label{eq: Talagrand's 2nd
condition}
\end{equation}
\label{theorem: Talagrand's Inequality}
\end{theorem}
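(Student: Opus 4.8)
The plan is to derive \eqref{eq: Talagrand's inequality} from Talagrand's \emph{convex-distance} inequality, which I would invoke as a known ingredient rather than prove from scratch. For a nonempty set $A \subseteq \prod_{k=1}^n A_k$ and a point ${\bf{x}}$, let
\begin{equation*}
d_T({\bf{x}}, A) \triangleq \sup_{a \in \reals_{+}^n,\, \|a\| = 1} \; \inf_{{\bf{y}} \in A} \; d_a({\bf{x}}, {\bf{y}})
\end{equation*}
be the convex distance from ${\bf{x}}$ to $A$; the convex-distance inequality asserts that $\pr({\bf{X}} \in A) \cdot \pr(d_T({\bf{X}}, A) \geq t) \leq e^{-t^2/4}$ for every $t \geq 0$ (see \cite[Chapter~4]{McDiarmid_tutorial}, \cite{Talagrand95}, \cite{Talagrand96}). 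The remaining work is to convert the coordinate-wise Lipschitz hypothesis \eqref{eq: Talagrand's condition} into the statement that a large deviation of $f$ from its median forces a large convex distance of ${\bf{X}}$ from a suitable level set, and then to read off the two tails.

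For the \textbf{upper tail} I would fix the median $m$, set $A \triangleq \{{\bf{y}} : f({\bf{y}}) \leq m\}$ (so $\pr({\bf{X}} \in A) \geq \tfrac12$), and show $\{f({\bf{X}}) \geq m + \alpha\} \subseteq \{d_T({\bf{X}}, A) \geq \alpha/\sigma\}$: indeed, if $f({\bf{x}}) \geq m + \alpha$ then the vector $a = a({\bf{x}})$ provided by \eqref{eq: Talagrand's condition} satisfies $f({\bf{x}}) \leq f({\bf{y}}) + \sigma d_a({\bf{x}}, {\bf{y}}) \leq m + \sigma d_a({\bf{x}}, {\bf{y}})$ for all ${\bf{y}} \in A$, so $\inf_{{\bf{y}} \in A} d_a({\bf{x}}, {\bf{y}}) \geq \alpha/\sigma$ and hence $d_T({\bf{x}}, A) \geq \alpha/\sigma$. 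The convex-distance inequality then gives $\pr(f({\bf{X}}) \geq m + \alpha) \leq e^{-\alpha^2/(4\sigma^2)}/\pr({\bf{X}} \in A) \leq 2 e^{-\alpha^2/(4\sigma^2)}$.

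For the \textbf{lower tail} I would run the same argument with the level set anchored at the threshold, namely $A' \triangleq \{{\bf{y}} : f({\bf{y}}) \leq m - \alpha\}$ (which we may assume nonempty, else the bound is trivial): now any ${\bf{x}}$ with $f({\bf{x}}) \geq m$ forces, via \eqref{eq: Talagrand's condition}, $f({\bf{x}}) \leq (m - \alpha) + \sigma d_a({\bf{x}}, {\bf{y}})$ for all ${\bf{y}} \in A'$, whence $d_T({\bf{x}}, A') \geq \alpha/\sigma$, i.e., $\{f({\bf{X}}) \geq m\} \subseteq \{d_T({\bf{X}}, A') \geq \alpha/\sigma\}$. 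Combining this with $\pr(f({\bf{X}}) \geq m) \geq \tfrac12$ and the convex-distance inequality yields $\pr(f({\bf{X}}) \leq m - \alpha) \leq 2 e^{-\alpha^2/(4\sigma^2)}$. Adding the two tail estimates gives the factor $4$ in \eqref{eq: Talagrand's inequality}. Finally, the statement under \eqref{eq: Talagrand's 2nd condition} follows by applying the result to $-f$, whose median is $-m$ and which, after rearranging \eqref{eq: Talagrand's 2nd condition}, satisfies \eqref{eq: Talagrand's condition} with the same constant $\sigma$ and the same vectors $a({\bf{x}})$.

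The real obstacle is the convex-distance inequality itself: its proof proceeds by induction on the number of coordinates $n$ with a delicate optimization of the inductive bound (this is where the constant $\tfrac14$ in the exponent comes from), and I would simply cite it. Everything else is bookkeeping --- one only has to be careful with strict versus non-strict inequalities when passing to level sets, and to observe that the freedom to let $a$ depend on ${\bf{x}}$ is precisely what the supremum in the definition of $d_T({\bf{x}}, A)$ absorbs.
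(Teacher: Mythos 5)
Your proposal is correct: the reduction of both tails to the convex-distance inequality via the level sets $\{f \leq m\}$ and $\{f \leq m-\alpha\}$, the factor-of-2 from $\pr({\bf{X}} \in A) \geq \tfrac{1}{2}$ in each tail, and the passage to $-f$ for the second condition are all sound, and you correctly identify that the $x$-dependence of $a$ is absorbed by the supremum in $d_T$. The paper itself offers no proof of this theorem --- it is stated as a known special form of Talagrand's inequality with citations to McDiarmid's survey and Talagrand's papers --- and your argument is precisely the standard derivation given in those references, with the convex-distance inequality (whose inductive proof is indeed the only non-routine ingredient) cited rather than reproved.
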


\begin{remark} In the special case where the condition for the
function $f$ in Theorem~\ref{theorem: Talagrand's Inequality}
(Talagrand's inequality) is satisfied with the additional property
that the vector $a$ on the right-hand side of \eqref{eq:
Talagrand's condition} is {\em independent} of $x$ (i.e., the
value of this vector is fixed), then the concentration inequality
in \eqref{eq: Talagrand's inequality} follows from McDiarmid's
inequality. To verify this observation, the reader is referred to
\cite[Theorem~3.6]{McDiarmid_tutorial} followed by the discussion
in \cite[p.~211]{McDiarmid_tutorial} (leading to \cite[Eqs.~(3.12)
and (3.13)]{McDiarmid_tutorial}. \label{remark: on a relation
between McDiardmid's inequality and a certain form of Talagrand's
inequality}
\end{remark}

\section{Application: Concentration of the Crest-Factor for OFDM Signals}
\label{section: OFDM}

\subsection{Background}
\label{subsection: background on OFDM} Given an $n$-length
codeword $\{X_i\}_{i=0}^{n-1}$, a single OFDM baseband symbol is
described by
\begin{equation}
s(t; X_0, \ldots, X_{n-1}) = \frac{1}{\sqrt{n}} \sum_{i=0}^{n-1}
X_i \exp\Bigl(\frac{j \, 2\pi i t}{T}\Bigr), \quad 0 \leq t \leq
T.  \label{eq: OFDM signal}
\end{equation}
Lets assume that $X_0, \ldots, X_{n-1}$ are i.i.d. complex RVs
with $|X_i|=1$. Since the sub-carriers are orthonormal over
$[0,T]$, then a.s. the power of the signal $s$ over this interval
is~1. The CF of the signal $s$, composed of $n$ sub-carriers, is
defined as
\begin{equation}
\text{CF}_n(s) \triangleq \max_{0 \leq t \leq T} | s(t) |.
\label{eq: CF}
\end{equation}
From \cite[Section~4]{SalemZ} and \cite{WunderB_IT}, it follows
that the CF scales with high probability like $\sqrt{\log(n)}$ for
large $n$. In \cite[Theorem~3 and Corollary~5]{LitsynW06}, a
concentration inequality was derived for the CF of OFDM signals.
It states that for every $c \geq 2.5$

\vspace*{-0.4cm} \small \begin{equation*} \pr \biggl(\Bigl|
\text{CF}_n(s) - \sqrt{\log(n)} \Bigr| < \frac{c \log \log
(n)}{\sqrt{\log(n)}} \biggr) = 1 - O\Biggl( \frac{1}{\bigl(
\log(n) \bigr)^4} \Biggr).
\end{equation*}
\normalsize
\begin{remark}
The analysis used to derive this rather strong concentration
inequality (see \cite[Appendix~C]{LitsynW06}) requires some
assumptions on the distribution of the $X_i$'s (see the two
conditions in \cite[Theorem~3]{LitsynW06} followed by
\cite[Corollary~5]{LitsynW06}). These requirements are not needed
in the following analysis, and the derivation of the two
concentration inequalities in this paper is simple, though weaker
concentration results are obtained.
\end{remark}

\vspace*{0.2cm} In the following, Azuma's inequality and a refined
version of this inequality are considered under the assumption
that $\{X_j\}_{j=0}^{n-1}$ are independent complex-valued random
variables with magnitude~1, attaining the $M$ points of an $M$-ary
PSK constellation with equal probability.

\subsection{Establishing Concentration of the Crest-Factor via Azuma's Inequality and a Refined Version}

\subsubsection{Proving Concentration via Azuma's
Inequality} In the following, Azuma's inequality is used to derive
a concentration result. Let us define
\begin{equation}
Y_i = \expectation[ \, \text{CF}_n(s) \, | \, X_0, \ldots,
X_{i-1}], \quad i =0, \ldots, n  \label{eq: martingale sequence
for OFDM}
\end{equation}
Based on a standard construction of Doob's martingales, $\{Y_i,
\mathcal{F}_i\}_{i=0}^n$ is a martingale where $\mathcal{F}_i$ is
the $\sigma$-algebra that is generated by the first $i$ symbols
$(X_0, \ldots, X_{i-1})$ in \eqref{eq: OFDM signal}. Hence,
$\mathcal{F}_0 \subseteq \mathcal{F}_1 \subseteq \ldots \subseteq
\mathcal{F}_n$ is a filtration. This martingale has also bounded
jumps, and
$$|Y_i - Y_{i-1}| \leq \frac{2}{\sqrt{n}}$$
for $i \in \{1, \ldots, n\}$ since revealing the additional $i$-th
coordinate $X_i$ affects the CF, as is defined in \eqref{eq: CF},
by at most $\frac{2}{\sqrt{n}}$ (see the first part of
Appendix~\ref{appendix: OFDM}). It therefore follows from Azuma's
inequality that, for every $\alpha > 0$,
\begin{equation}
\hspace*{-0.2cm} \pr( | \text{CF}_n(s) - \expectation [
\text{CF}_n(s)] | \geq \alpha) \leq 2
\exp\left(-\frac{\alpha^2}{8}\right) \label{eq: Azuma's inequality
for OFDM}
\end{equation}
which demonstrates the concentration of this measure around its
expected value.

\vspace*{0.2cm}
\subsubsection{Proof of Concentration via
Proposition~\ref{proposition: a similar scaling of the
concentration inequalities}} In the following, we rely on
Proposition~\ref{proposition: a similar scaling of the
concentration inequalities} to derive an improved concentration
result. For the martingale sequence $\{Y_i\}_{i=0}^n$ in
\eqref{eq: martingale sequence for OFDM}, Appendix~\ref{appendix:
OFDM} gives that a.s.
\begin{equation}
|Y_i - Y_{i-1}| \leq \frac{2}{\sqrt{n}} \, , \quad
\expectation\bigl[(Y_i-Y_{i-1})^2  | \mathcal{F}_{i-1}\bigr] \leq
\frac{2}{n}
\label{eq: properties for OFDM signals}
\end{equation}
for every $i \in \{1, \ldots, n\}$. Note that the conditioning on
the $\sigma$-algebra $\mathcal{F}_{i-1}$ is equivalent to the
conditioning on the symbols $X_0, \ldots, X_{i-2}$, and there is
no conditioning for $i=1$. Let $Z_i = \sqrt{n} Y_i$.
Proposition~\ref{proposition: a similar scaling of the
concentration inequalities} therefore implies that for an
arbitrary $\alpha
> 0$
\begin{eqnarray}
&& \pr( | \text{CF}_n(s) - \expectation [
\text{CF}_n(s)] | \geq \alpha ) \nonumber\\
&& = \pr( |Y_n - Y_0| \geq \alpha) \nonumber\\
&& = \pr( |Z_n - Z_0| \geq \alpha \sqrt{n}) \nonumber\\
&& \leq 2 \exp \left( -\frac{\alpha^2}{4} \, \Bigl(1 +
O\Bigl(\frac{1}{\sqrt{n}}\Bigr) \right) \label{eq:
OFDM-inequality1}
\end{eqnarray}
(since $\delta = \frac{\alpha}{2}$ and $\gamma = \frac{1}{2}$ in
the setting of Proposition~\ref{proposition: a similar scaling of
the concentration inequalities}). Note that the exponent of the
last concentration inequality is doubled as compared to the bound
that was obtained in \eqref{eq: Azuma's inequality for OFDM} via
Azuma's inequality, and the term which scales like
$O\Bigl(\frac{1}{\sqrt{n}}\Bigr)$ on the right-hand side of
\eqref{eq: OFDM-inequality1} is expressed explicitly for finite
$n$ (see \cite[Appendix~H]{Sason_submitted_paper}).

\subsection{Establishing Concentration via
McDiarmid's Inequality} In the following, McDiarmid's inequality
is applied to prove a concentration inequality for the crest
factor of OFDM signals. To this end, let us define
\begin{eqnarray*}
&& U \triangleq \max_{0 \leq t \leq T} \bigl|s(t; X_0, \ldots,
X_{i-1},
X_i, \ldots, X_{n-1})\bigr| \\
&& V \triangleq \max_{0 \leq t \leq T} \bigl|s(t; X_0, \ldots,
X'_{i-1}, X_i, \ldots, X_{n-1})\bigr|.
\end{eqnarray*}
Then, this implies that
\begin{eqnarray}
&& \hspace*{-0.5cm} |U-V| \leq \max_{0 \leq t \leq T} \bigl|s(t;
X_0, \ldots,
X_{i-1}, X_i, \ldots, X_{n-1}) \nonumber \\
&& \hspace*{2.5cm} - s(t; X_0, \ldots, X'_{i-1}, X_i, \ldots,
X_{n-1})\bigr| \nonumber \\[0.15cm]
&& \hspace*{0.8cm} = \max_{0 \leq t \leq T} \frac{1}{\sqrt{n}} \,
\Bigr|\bigl(X_{i-1} - X'_{i-1}\bigr) \exp\Bigl(\frac{j \, 2\pi i
t}{T}\Bigr)\Bigr| \nonumber \\[0.15cm]
&& \hspace*{0.8cm} = \frac{|X_{i-1} - X'_{i-1}|}{\sqrt{n}} \leq
\frac{2}{\sqrt{n}} \label{eq: bound2 on |U-V|}
\end{eqnarray}
where the last inequality holds since $|X_{i-1}|=|X'_{i-1}|=1$.
Hence, McDiarmid's inequality in Theorem~\ref{theorem: McDiarmid's
Inequality} implies that, for every $\alpha \geq 0$,
\begin{eqnarray}
\pr( | \text{CF}_n(s) - \expectation [ \text{CF}_n(s)] | \geq
\alpha ) \leq 2 \exp\Bigl(-\frac{\alpha^2}{2}\Bigr) \label{eq:
McDiarmid's inequality for OFDM}
\end{eqnarray}
which demonstrates concentration around the expected value. It is
noted that McDiarmid's inequality provides an improvement in the
exponent by a factor of~4 as compared to Azuma's inequality. It
also improves the exponent by a factor of~2 as compared to
Proposition~\ref{proposition: a similar scaling of the
concentration inequalities} in the considered case (where $\gamma
= \frac{1}{2}$).

The same kind of result applies easily to QAM-modulated OFDM
signals, since the RVs are bounded which therefore enables to get
a similar result to \eqref{eq: bound2 on |U-V|}.

\subsection{Establishing Concentration via Talagrand's
Inequality} In the following, Talagrand's inequality is applied to
prove a concentration inequality for the crest factor of OFDM
signals. Let us assume that $X_0, Y_0, \ldots, X_{n-1}, Y_{n-1}$
are i.i.d. bounded complex RVs, and also for simplicity
$$|X_i|=|Y_i|=1.$$ In order to apply Talagrand's inequality to
prove concentration, note that
\begin{eqnarray*}
&& \max_{0 \leq t \leq T} \bigl| \, s(t; X_0, \ldots,
X_{n-1})\bigr| -
\max_{0 \leq t \leq T} \bigl| \, s(t; Y_0, \ldots, Y_{n-1})\bigr| \\
&& \leq \max_{0 \leq t \leq T} \bigl| \, s(t; X_0, \ldots,
X_{n-1}) - s(t; Y_0, \ldots, Y_{n-1})\bigr| \\
&& \leq \frac{1}{\sqrt{n}} \; \left| \, \sum_{i=0}^{n-1} (X_i -
Y_i) \exp\Bigl(\frac{j \, 2\pi i
t}{T}\Bigr) \right| \\
&& \leq \frac{1}{\sqrt{n}} \; \sum_{i=0}^{n-1} |X_i - Y_i| \\
&& \leq \frac{2}{\sqrt{n}} \sum_{i=0}^{n-1} I_{\{x_i \neq y_i\}}
\\[0.1cm]
&& = 2 d_a(X,Y)
\end{eqnarray*}
where
\begin{equation}
a \triangleq \bigl(\frac{1}{\sqrt{n}}, \ldots,
\frac{1}{\sqrt{n}}\bigr) \label{eq: fixed vector a}
\end{equation}
is a non-negative unit-vector of length $n$ (note that $a$ in this
case is independent of $x$). Hence, Talagrand's inequality in
Theorem~\ref{theorem: Talagrand's Inequality} implies that, for
every $\alpha \geq 0$,
\begin{eqnarray}
\pr( | \text{CF}_n(s) - m_n | \geq \alpha ) \leq 4
\exp\Bigl(-\frac{\alpha^2}{16}\Bigr), \quad \forall \, \alpha
> 0 \label{eq: Talagrand's inequality for OFDM}
\end{eqnarray}
where $m_n$ is the median of the crest factor for OFDM signals
that are composed of $n$ sub-carriers. This inequality
demonstrates the concentration of this measure around its median.
As a simple consequence of \eqref{eq: Talagrand's inequality for
OFDM}, one obtains the following result.
\begin{corollary}
The median and expected value of the crest factor differ by at
most a constant, independently of the number of sub-carriers $n$.
\end{corollary}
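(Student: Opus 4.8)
The plan is to deduce the corollary directly from Talagrand's concentration inequality \eqref{eq: Talagrand's inequality for OFDM}, without any new probabilistic input. Write $\mu_n \triangleq \expectation[\text{CF}_n(s)]$ and let $m_n$ be the median of $\text{CF}_n(s)$. First I would use the elementary fact that the distance between the mean and the median of any integrable random variable is dominated by the mean absolute deviation from the median: by Jensen's inequality,
\[
|\mu_n - m_n| = \bigl| \expectation[\text{CF}_n(s) - m_n] \bigr| \leq \expectation\bigl[ \, | \text{CF}_n(s) - m_n | \, \bigr].
\]

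Next I would represent the right-hand side by the layer-cake (tail-integral) formula and then insert the Talagrand bound, using also that a probability never exceeds $1$:
\[
\expectation\bigl[ \, | \text{CF}_n(s) - m_n | \, \bigr] = \int_0^\infty \pr\bigl( | \text{CF}_n(s) - m_n | \geq \alpha \bigr) \, d\alpha \leq \int_0^\infty \min\Bigl\{ 1, \; 4 \exp\bigl(-\tfrac{\alpha^2}{16}\bigr) \Bigr\} \, d\alpha.
\]
The last integral is a finite absolute constant: splitting at $\alpha_0 = 4\sqrt{\ln 4}$ (the point where $4 e^{-\alpha^2/16} = 1$) and bounding the tail by $\int_0^\infty 4 e^{-\alpha^2/16}\, d\alpha = 8\sqrt{\pi}$ gives $|\mu_n - m_n| \leq 4\sqrt{\ln 4} + 8\sqrt{\pi}$. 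The key point is that this bound is \emph{independent of $n$}, because the right-hand side of \eqref{eq: Talagrand's inequality for OFDM} does not depend on $n$ — the Lipschitz constant $\sigma = 2$ obtained in the preceding derivation (via the fixed unit vector $a$ in \eqref{eq: fixed vector a}) does not grow with the number of sub-carriers.

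There is no real obstacle here; the only point needing a little care is that the Talagrand bound $4 e^{-\alpha^2/16}$ exceeds $1$ for small $\alpha$, so one must truncate the integrand at $1$ rather than naively integrate the exponential over all of $[0,\infty)$. One could also sidestep this by quoting the generic fact that $\int_0^\infty \min\{1, c\, e^{-\beta \alpha^2}\}\, d\alpha < \infty$ for any $c,\beta>0$. Either way, the conclusion is that $|\mu_n - m_n|$ is bounded above by a universal constant, uniformly in $n$, which is precisely the assertion of the corollary.
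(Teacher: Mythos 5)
Your argument is correct and is essentially identical to the paper's proof: Jensen's inequality, the tail-integral identity for the non-negative random variable $|\text{CF}_n(s) - m_n|$, and insertion of the $n$-independent Talagrand bound. The only divergence is your truncation of the integrand at $1$, which is unnecessary (the bound $4e^{-\alpha^2/16}$ is a valid, merely weak, upper bound for small $\alpha$) and in the form you execute it yields the slightly worse constant $4\sqrt{\ln 4} + 8\sqrt{\pi}$ instead of the paper's direct $\int_0^{\infty} 4 \exp(-\alpha^2/16)\, d\alpha = 8\sqrt{\pi}$.
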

\begin{proof}
By Talagrand's inequality in \eqref{eq: Talagrand's inequality for
OFDM}, it follows that
\begin{eqnarray*}
&& | \expectation [ \text{CF}_n(s)] - m_n | \\
&& \leq \expectation \left| \text{CF}_n(s) - m_n \right| \\
&& \stackrel{\text{(a)}}{=} \int_{0}^{\infty} \pr( |
\text{CF}_n(s) - m_n | \geq \alpha ) \;
d\alpha \\
&& \leq \int_{0}^{\infty} 4 \exp\Bigl(-\frac{\alpha^2}{16}\Bigr)
\; d\alpha \\
&& = 8 \sqrt{\pi}
\end{eqnarray*}
where equality~(a) holds since for a non-negative random variable
$Z$
\begin{equation*}
\expectation[Z] = \int_0^{\infty} \pr(Z \geq t) \, dt.
\end{equation*}
\end{proof}
\begin{remark}
This result applies in general to an arbitrary function~$f$
satisfying the condition in \eqref{eq: Talagrand's condition},
where Talagrand's inequality in \eqref{eq: Talagrand's inequality}
implies that (see, e.g., \cite[Lemma~4.6]{McDiarmid_tutorial})
$$ \, \big| \expectation [f(X)] - m \big| \leq 4 \sigma \sqrt{\pi}.$$
\end{remark}

\begin{remark}
By comparing \eqref{eq: Talagrand's inequality for OFDM} with
\eqref{eq: McDiarmid's inequality for OFDM}, it follows that
McDiarmid's inequality provides an improvement in the exponent.
This is consistent with Remark~\ref{remark: on a relation between
McDiardmid's inequality and a certain form of Talagrand's
inequality} and the fixed value of the non-negative normalized
vector in \eqref{eq: fixed vector a}.
\end{remark}

\section{Summary}
This paper derives four concentration inequalities for the
crest-factor (CF) of OFDM signals under the assumption that the
symbols are independent. The first two concentration inequalities
rely on Azuma's inequality and a refined version of it, and the
last two concentration inequalities are based on Talagrand's and
McDiarmid's inequalities. Although these concentration results are
weaker than some existing results from the literature (see
\cite{LitsynW06} and \cite{WunderB_IT}), they establish
concentration in a rather simple way and provide some insight to
the problem. The use of these bounding techniques, in the context
of concentration for OFDM signals, seems to be new. The improvement
of McDiarmid's inequality is by a factor of~4 in the exponent as
compared to Azuma's inequality, and by a factor of~2 as compared
to the refined version of Azuma's inequality in
Proposition~\ref{proposition: a similar scaling of the
concentration inequalities}. Note however
that Proposition~\ref{proposition: a similar scaling of the
concentration inequalities} may be in general tighter than
McDiarmid's inequality (if $\gamma < \frac{1}{4}$ in the setting
of Proposition~\ref{proposition: a similar scaling of the
concentration inequalities}). It also follows from Talagrand's
method that the median and expected value of the CF differ by at
most a constant, independently of the number of sub-carriers.

Some other new refined versions of Azuma's inequality were
introduced in \cite{Sason_submitted_paper}, followed by some
applications in information theory and communications. This work
is aimed to stimulate the use of some refined versions of
concentration inequalities, based on the martingale approach and
Talagrand's approach, in information-theoretic aspects.

The slides of the presentation of this work are available at
\cite{Sason_slides}.

\appendices

\section{Proof of the properties in \eqref{eq: properties for OFDM signals} for OFDM signals}
\label{appendix: OFDM} Consider an OFDM signal from
Section~\ref{subsection: background on OFDM}. The sequence in
\eqref{eq: martingale sequence for OFDM} is a martingale due to
basic properties of martingales. From \eqref{eq: CF}, for every $i
\in \{0, \ldots, n\}$
\begin{eqnarray*}
&& \hspace*{-0.7cm} Y_i = \expectation \Bigl[ \, \max_{0 \leq t
\leq T} \bigl|s(t; X_0, \ldots, X_{n-1}) \bigr| \Big| \, X_0,
\ldots, X_{i-1}\Bigr].
\end{eqnarray*}
The conditional expectation for the RV $Y_{i-1}$ refers to the
case where only $X_0, \ldots, X_{i-2}$ are revealed. Let
$X'_{i-1}$ and $X_{i-1}$ be independent copies, which are also
independent of $X_0, \ldots, X_{i-2}, X_i, \ldots, X_{n-1}$. Then,
for every $1 \leq i \leq n$,
\begin{eqnarray*} && \hspace*{-0.7cm} Y_{i-1} =
\expectation \Bigl[ \, \max_{0 \leq t \leq T} \bigl|s(t; X_0,
\ldots, X'_{i-1}, X_i, \ldots,
X_{n-1})\bigr| \\
&& \hspace*{4cm} \Big| \, X_0, \ldots, X_{i-2} \Bigr] \\
&& = \expectation \Bigl[ \, \max_{0 \leq t \leq T} \bigl|s(t; X_0,
\ldots, X'_{i-1}, X_i, \ldots,
X_{n-1})\bigr| \\
&& \hspace*{4cm} \Big| \, X_0, \ldots, X_{i-2}, X_{i-1} \Bigr].
\end{eqnarray*}
Since $|\expectation(Z)| \leq E(|Z|)$, then for $i \in \{1,
\ldots, n\}$

\vspace*{-0.2cm} \small
\begin{equation}
\hspace*{-0.2cm} |Y_i - Y_{i-1}| \leq \expectation_{X'_{i-1}, X_i,
\ldots, X_{n-1}} \Bigl[|U-V| \; \Big| \; X_0, \ldots, X_{i-1}
\Bigr] \label{eq: bounded differences for the sequence Y}
\end{equation}
\normalsize where
\begin{eqnarray*}
&& U \triangleq \max_{0 \leq t \leq T} \bigl|s(t; X_0, \ldots,
X_{i-1},
X_i, \ldots, X_{n-1})\bigr| \\
&& V \triangleq \max_{0 \leq t \leq T} \bigl|s(t; X_0, \ldots,
X'_{i-1}, X_i, \ldots, X_{n-1})\bigr|.
\end{eqnarray*}
From \eqref{eq: OFDM signal}
\begin{eqnarray}
&& \hspace*{-1.5cm} |U-V| \leq \max_{0 \leq t \leq T} \bigl|s(t;
X_0, \ldots,
X_{i-1}, X_i, \ldots, X_{n-1}) \nonumber \\
&& \hspace*{1cm} - s(t; X_0, \ldots, X'_{i-1}, X_i, \ldots,
X_{n-1})\bigr| \nonumber \\[0.15cm]
&& \hspace*{-0.2cm} = \max_{0 \leq t \leq T} \frac{1}{\sqrt{n}} \,
\Bigr|\bigl(X_{i-1} - X'_{i-1}\bigr) \exp\Bigl(\frac{j \, 2\pi i
t}{T}\Bigr)\Bigr| \nonumber \\[0.15cm]
&& \hspace*{-0.2cm} = \frac{|X_{i-1} - X'_{i-1}|}{\sqrt{n}}.
\label{eq: bound on |U-V|}
\end{eqnarray}
By assumption, $|X_{i-1}| = |X'_{i-1}| = 1$, and therefore a.s.
$$|X_{i-1} - X'_{i-1}| \leq 2 \Longrightarrow |Y_i - Y_{i-1}| \leq \frac{2}{\sqrt{n}}.$$
In the following, an upper bound on the conditional variance
$$\text{Var}(Y_i \, | \, \mathcal{F}_{i-1}) = \expectation \bigl[
(Y_i - Y_{i-1})^2 \, | \, \mathcal{F}_{i-1} \bigr]$$ is obtained.
Since $\bigl(\expectation(Z)\bigr)^2 \leq \expectation(Z^2)$ for a
real-valued RV $Z$, then from \eqref{eq: bounded differences for
the sequence Y} and \eqref{eq: bound on |U-V|}
\begin{equation*}
\vspace*{-0.1cm} \expectation\bigl[(Y_i - Y_{i-1})^2 \, |
\mathcal{F}_{i-1}\bigr] \leq \frac{1}{n} \cdot
\expectation_{X'_{i-1}} \bigl[|X_{i-1} - X'_{i-1}|^2 \, | \,
\mathcal{F}_i \bigr]
\end{equation*}
where $\mathcal{F}_i$ is the $\sigma$-algebra that is generated by
$X_0, \ldots, X_{i-1}$. Due to a symmetry argument of the PSK
constellation, then it follows that
\begin{eqnarray*}
&& \hspace*{-2.6cm} \expectation \bigl[ (Y_i - Y_{i-1})^2 \, |
\, \mathcal{F}_{i-1} \bigr] \\
&& \hspace*{-2.6cm} \leq \frac{1}{n} \, \expectation_{X'_{i-1}}
\bigl[|X_{i-1} - X'_{i-1}|^2 \, | \, \mathcal{F}_i \bigr] \\
&& \hspace*{-2.6cm} = \frac{1}{n} \, \expectation \bigl[
|X_{i-1} - X'_{i-1}|^2 \, | \, X_0, \ldots, X_{i-1} \bigr] \\
&& \hspace*{-2.6cm} = \frac{1}{n} \, \expectation \bigl[
|X_{i-1} - X'_{i-1}|^2 \, | \, X_{i-1} \bigr] \\
&& \hspace*{-2.6cm} = \frac{1}{n} \, \expectation \Bigl[ |X_{i-1}
- X'_{i-1}|^2 \, | \, X_{i-1} = e^{\frac{j \pi}{M}} \Bigr] \\
&& \hspace*{-2.6cm} = \frac{1}{nM} \sum_{l=0}^{M-1}
\Big| \, e^{\frac{j \pi}{M}}-e^{\frac{j (2l+1)\pi}{M}}\Big|^2 \\
&& \hspace*{-2.6cm} = \frac{4}{nM} \sum_{l=1}^{M-1} \sin^2 \Bigl(
\frac{\pi l}{M} \Bigr) = \frac{2}{n}.
\end{eqnarray*}
%To clarify the last equality, note that if $x \in \reals$ and $m
%\in \naturals$
%\begin{eqnarray*}
%&& \hspace*{-1.6cm} \sum_{l=1}^m \sin^2(lx)
%%= \frac{1}{2} \sum_{l=0}^m \bigl(1-\cos(2lx)\bigr) \\
%%&& \hspace*{-1.6cm} = \frac{m+1}{2} - \frac{1}{2} \, \text{Re}
%%\biggl\{ \sum_{l=0}^m e^{j 2lx} \biggr\} \\
%%&& \hspace*{-1.6cm} = \frac{m+1}{2} - \frac{1}{2} \, \text{Re}
%%\biggl\{\frac{1-e^{2j(m+1)x}}{1-e^{2jx}} \biggr\} \\
%%&& \hspace*{-1.6cm} = \frac{m+1}{2} - \frac{1}{2} \, \text{Re}
%%\Biggl\{\frac{\bigl(e^{j(m+1)x}-e^{-j(m+1)x}\bigr)
%%e^{jmx}}{e^{jx}-e^{-jx}} \Biggr\} \\
%%&& \hspace*{-1.6cm} = \frac{m+1}{2} - \frac{1}{2} \, \text{Re}
%%\biggl\{\frac{\sin((m+1)x) \, e^{jmx}}{\sin(x)} \biggr\} \\
%%&& \hspace*{-1.6cm}
%= \frac{m+1}{2} - \frac{\sin\bigl((m+1)x\bigr) \, \cos(mx)}{2
%\sin(x)} \, .
%\end{eqnarray*}

\end{document}